\newtheorem{theorem}{Theorem}
\newtheorem{remark}{Remark}
\newtheorem{lemma}{Lemma}
\newtheorem{problem}{Problem}
\newtheorem{assumption}{Assumption}
\newtheorem{lemma-non}{Lemma}
\journal{Control Engineering Practice}
\begin{document}

\begin{frontmatter}

\title{Discrete-Time Implementation of Explicit Reference Governor\tnoteref{Funding}\tnoteref{Funding2}}

\tnotetext[Funding]{This research has been supported by WSU Voiland College of Engineering and Architecture through a start-up package to M. Hosseinzadeh.}
\tnotetext[Funding2]{The authors declare that there is no conflict of interest regarding the publication of this article.}

\author{Mu'taz A. Momani}
\ead{mutaz.momani@wsu.edu}

\author{Mehdi Hosseinzadeh\corref{cor}}
\ead{mehdi.hosseinzadeh@wsu.edu}

\address{School of Mechanical and Materials Engineering, Washington State University, Pullman, WA 99164, USA}

\cortext[cor]{Corresponding author.}

\begin{abstract}
Explicit reference governor (ERG) is an \textit{add-on} unit that provides constraint handling capability to pre-stabilized systems. The main idea behind ERG is to manipulate the derivative of the applied reference in continuous time such that the satisfaction of state and input constraints is guaranteed at all times. However, ERG should be practically implemented in discrete-time. This paper studies the discrete-time implementation of ERG, and provides conditions under which the feasibility and convergence properties of the ERG framework are maintained when the updates of the applied reference are performed in discrete time. The proposed approach is validated via extensive simulation and experimental studies. 
\end{abstract}

\begin{keyword}
Explicit Reference Governor \sep Discrete-Time Implementation \sep Constrained Control \sep Constraint Satisfaction. 
\end{keyword}

\end{frontmatter}


\section{Introduction}
The satisfaction of constraints (e.g., safety requirements and actuator saturations) is a crucial requirement for the control of many real-world systems. This points out the critical need for constrained control schemes, where the constraint satisfaction is enforced at all times, while addressing performance and control objectives.

In the last few decades, the literature on constrained control has been dominated by optimization-based schemes, more specifically, Model Predictive Control (MPC) \cite{Camacho2013,Rawlings2017}, Control Barrier Function (CBF) \cite{Ames2019,Cortez2021}, and Reference/Command Governor (RG/CG) \cite{Garone2016,Hosseinzadeh2022_ROTEC}. These methods have the remarkable feature of being able to optimize various measures of performance objectives while guaranteeing the satisfaction of constraints at all times. However, the use of online optimization precludes the application of those methods to safety-critical systems where the onboard computing power is limited and there are stringent software reliability/certification requirements.

Recently, a novel scheme called explicit RG (ERG) has been proposed for the control of constrained systems \cite{nicotra2018explicit,HosseinzadehECC,Hosseinzadeh2020,hosseinzadeh2019explicit,Hosseinzadeh2020_CEP,HosseinzadehLetter,Garone2018,Nicotra2015,Garone2016_ERG}. The ERG is a framework inspired by the RG philosophy, whose main novelty is that it does not require any online optimization. The main idea behind the ERG is to manipulate the derivative of the applied reference in continuous time using a suitable nonlinear control law, ensuring that constraints are satisfied at all time.

Given its generality and implementation simplicity, the ERG has the potential to become a common method for the constrained control of those systems in which the use of optimization-based methods is not realistic. At the present stage, one of the main obstacles to a widespread diffusion of the ERG is that it has been introduced as a continuous-time scheme and its theoretical properties have been derived under the assumption that ERG updates the applied reference in continuous time. However, ERG should be practically implemented in discrete-time, and its not clear if  its properties are maintained when the applied reference is updated in discrete time. Note that prior work involving computer implementation of ERG (see, e.g., \cite{Merckaert2022,Nicotra2019_Delay,HosseinzadehMED}) have not provided implementation details, and to the best of our knowledge, discrete-time implementation of ERG has not been carefully investigated in the literature.

This paper studies the theoretical guarantees of ERG when the updates of the applied reference are computed in discrete time, and provides conditions under which the properties of the ERG framework are maintained. Extensive simulation and experimental studies have been provided to support the analyses.

The rest of this paper is organized as follows. Section \ref{sec:PS} states the problem. Section \ref{sec:DTI} proposes the discrete-time ERG, and analyzes its properties. Section \ref{sec:MainResults} outlines the discrete-time ERG as a main result of the paper, highlighting its features and providing a pseudocode. Extensive numerical simulations have been reported in Section \ref{sec:NA}. Section \ref{sec:ExperimentalResults} presents the experimental validation of the proposed discrete-time ERG using a Parrot Bebop 2 drone. Finally, Section \ref{sec:Conclusion} concludes the paper.

\paragraph*{Notation} We denote the set of real numbers by $\mathbb{R}$, the set of positive real numbers by $\mathbb{R}_{>0}$, and the set of non-negative real numbers by $\mathbb{R}_{\geq0}$. We use $\mathbb{Z}_{\geq0}$ to denote the set of non-negative integer numbers. We denote the transpose of matrix $A$ by $A^\top$. For a function $Y(x)$, its gradient with respect to $x$ is denoted as $\nabla_xY(x)$; also, $Y(x)|_{x=x^\dag}$ indicates that the function $Y(x)$ is evaluated at $x=x^\dag$. We use $t$ to denote continuous time and $k$ to denote sampling instants. Given $x\in\mathbb{R}^n$,  $\left\Vert x\right\Vert^2=x^\top x$. Finally, $\mathbf{0}$ denotes the zero matrix with appropriate dimension.


\section{Problem Statement}\label{sec:PS}
Consider the following pre-stabilized system
\begin{align}\label{eq:system}
\dot{x}(t)=f\big(x(t),v(t)\big),
\end{align}
where $x(t)=[x_1(t)~\cdots~x_n(t)]^\top\in\mathbb{R}^n$ is the state vector and $v(t)=[v_1(t)~\cdots~v_m(t)]^\top\in\mathbb{R}^m$ is the applied reference signal. By pre-stabilized, we mean that for any constant applied reference $v$, system \eqref{eq:system} asymptotically converges to the equilibrium point $\bar{x}_v$ satisfying $f(\bar{x}_v,v)=0$. The pre-stabilization of unconstrained systems is the subject of an extensive literature (see, e.g., \cite{Ogata,Khalil,Isidori1995}), and can be addressed using a variety of available techniques.

\begin{assumption}\label{Assumption:Lipchitz}
There exists $\mu\in\mathbb{R}_{>0}$ such that $\left\Vert\nabla_v\bar{x}_v\right\Vert\leq\mu$. Note that this assumption is reasonable as it corresponds to a bound on the sensitivity of $\bar{x}_v$ to $v$.
\end{assumption}

Since system \eqref{eq:system} is pre-stabilized, the converse theorem \cite[Section 3.6]{Khalil} implies that for any constant applied reference $v$, there exists a Lyapunov function $V\big(x,v\big)$ that proves the asymptotic stability of the equilibrium point $\bar{x}_v$. Suppose the Lyapunov function $V\big(x,v\big)$ has the following property, for some $m_1,m_2\in\mathbb{R}_{>0}$:
\begin{align}\label{eq:Lyapunov}
m_1\left\Vert x-\bar{x}_{v}\right\Vert^2\leq V\left(x,v\right)\leq m_2\left\Vert x-\bar{x}_{v}\right\Vert^2.
\end{align}
.

Let system \eqref{eq:system} be subject to the following constraints:
\begin{align}\label{eq:constraints}
c_i(x,v)\geq0,~i=1,\cdots,n_c
\end{align}
where $c_i:\mathbb{R}^n\times\mathbb{R}^m\rightarrow\mathbb{R}$ and $n_c\in\mathbb{R}_{\geq0}$ indicates the number of constraints. For any constant applied reference $v$, the set $\{x|c_i\left(x,v\right)\geq0,~i\in\{1,\cdots,n_c\}\}$ is convex\footnote{When the set $\{x|c_i\left(x,v\right)\geq0,~i\in\{1,\cdots,n_c\}\}$ is non-convex, one can, first, use the techniques described in \cite{Koditschek1990,Nunez2008,Ji2015} to distort the set into a convex one, and then use the techniques described in \cite{nicotra2018explicit} to build an ERG. Future work will consider more general constraints.}.

Given the desired reference $r\in\mathbb{R}^m$, we use the ERG framework to determine the applied reference $v(t)$ that steers the states of the system to $\bar{x}_r$ (or to an admissible approximation of $\bar{x}_r$). The ERG manipulates the applied reference $v(t)$ according to the following continuous-time nonlinear law:
\begin{align}\label{eq:ERGClassic}
\dot{v}(t)=\kappa\cdot g(x(t),v(t),r),
\end{align}
where $\kappa\in\mathbb{R}_{>0}$ is an arbitrary scalar, and $g(x,v,r):=\Delta(x,v)\cdot\rho(v,r)$ with $\Delta(x,v)$ being the Dynamic Safety Margin (DSM) and $\rho(v,r)$ being the Attraction Field (AF); see Figure \ref{fig:erg} for a schematic structure of ERG.

As discussed in \cite{hosseinzadeh2019explicit,nicotra2018explicit}, one possible way to build a DSM for system \eqref{eq:system} is to use Lyapunov theory\footnote{Note that even though this paper considers a Lyapunov-based DSM, the presented analyses are true when ERG is implemented with a trajectory-based DSM; see \cite{nicotra2018explicit,Hosseinzadeh2020} for details on trajectory-based DSMs.}. Given a constant $v\in\mathbb{R}^m$, let $\Gamma\big(v\big)$ be a threshold value computed by solving the following optimization problem:
\begin{align}\label{eq:OptimizationLyapunovBasedERG}
\Gamma\left(v\right)=\left\{
\begin{array}{cc}
     &  \min\limits_{z\in\mathbb{R}^n} V(z,v)\\
    \text{s.t.} & c_i(z,v)\leq0,~\forall i
\end{array}
\right..
\end{align}

It is obvious that $V\big(x(t),v\big)\leq\Gamma\big(v\big)$ implies that $c_i\left(x(t),v\right)\geq0,~\forall i$ at all times. Then, a DSM can be defined as
\begin{align}\label{eq:DSM}
\Delta\big(x(t),v\big)=\Gamma\big(v\big)-V\big(x(t),v\big). 
\end{align}

\begin{remark}
Given a constant $v$, any $\hat{\Gamma}\left(v\right)\leq\Gamma\left(v\right)$ is a sub-optimal threshold value \cite{nicotra2018explicit}. Thus, according to \eqref{eq:Lyapunov}, the following optimization problem gives a sub-optimal threshold value:
\begin{align}\label{eq:GammaOpt1}
\hat{\Gamma}\left(v\right)=\left\{
\begin{array}{cc}
     &  \min\limits_{z\in\mathbb{R}^n} m_1\left\Vert z-\bar{x}_{v}\right\Vert^2\\
    \text{s.t.} & c_i(z,v)\leq0,~\forall i
\end{array}
\right..
\end{align}
\end{remark}

Regarding the AF, it can be designed \cite{HosseinzadehMED} by decoupling it into an attraction term and a repulsion term, as follows: 
\begin{align}\label{eq:AF}
\rho(v,r)=\rho_a(v,r)+\rho_r(v),
\end{align}
where the attraction term $\rho_a(v,r)$ is a vector field which points towards
the desired reference, and the repulsion term $\rho_r(v)$ is a vector field which points away from the constraints. These two vector fields can be defined as
\begin{align}\label{eq:AFAttraction}
\rho_a(v,r)=\frac{r-v}{\max\{\left\Vert r-v\right\Vert,\eta_1\}},
\end{align}
and 
\begin{align}\label{eq:AFRepulsive}
\rho_r(v)=&\sum\limits_{i=1}^{n_c}\max\left\{\frac{\xi-c_i(\bar{x}_v,v)}{\xi-\delta},0\right\}\frac{\nabla_vc_i(\bar{x}_v,v)}{\left\Vert\nabla_vc_i(\bar{x}_v,v)\right\Vert},
\end{align}
where $\eta_1\in\mathbb{R}_{>0}$ is a smoothing factor, and $\xi,\delta\in\mathbb{R}_{>0}$ satisfying $\xi>\delta$ are design parameters.

\begin{figure}
    \centering
    \includegraphics[width=\linewidth]{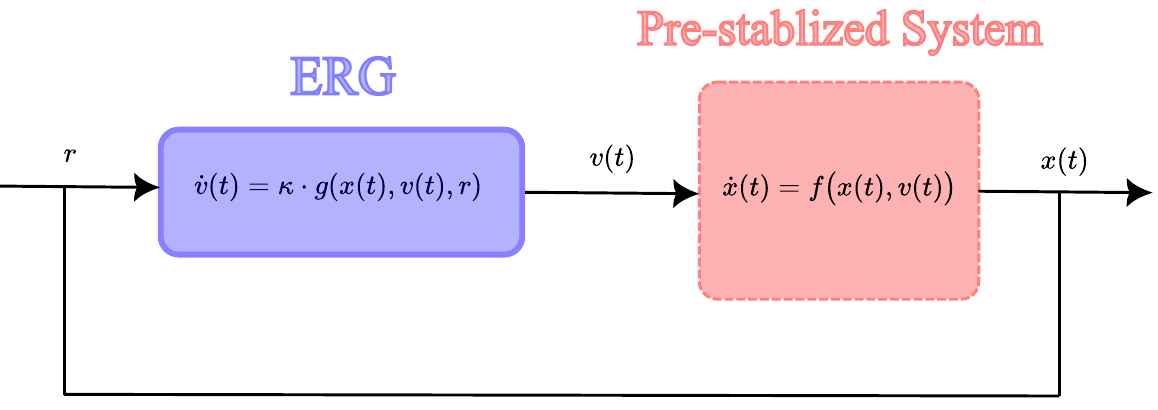}
    \caption{Schematic structure of the ERG framework.}
    \label{fig:erg}
\end{figure}

This paper considers the following problem. 

\begin{problem}\label{Problem}
Let ERG given in \eqref{eq:ERGClassic} be implemented in discrete time by using the Euler method, as follows:
\begin{align}\label{eq:ERGDiscrete}
v(kdt)=v((k-1)dt)+dt\cdot\kappa\cdot g(x(kdt),v((k-1)dt),r),
\end{align}
where $dt$ is the sampling period (i.e., $[(k-1)dt,kdt)$ is equal to $dt$ seconds), and $k\in\mathbb{Z}_{\geq0}$. Equation \eqref{eq:ERGDiscrete} indicates that the applied reference $v(t)$ is updated every $dt$ seconds based on the current value of the state and the previous value of the applied reference. Given the desired reference $r\in\mathbb{R}^m$, provide conditions under which a zero-order hold implementation of the applied reference (i.e., $v(t)=v(kdt)$ for $t\in[kdt,(k+1)dt)$) satisfies the following properties: i) constraints \eqref{eq:constraints} are satisfied at all time (i.e., $c_i(x(t),v(t))\geq0,~\forall i$ for $t\geq0$); ii) if $r$ is \textit{strictly} steady-state admissible (i.e., $c_i(\bar{x}_{r},r)\geq\xi,~\forall i$), then $v(kdt)\rightarrow r$ as $k\rightarrow\infty$; and iii) if $r$ is not strictly steady-state admissible, then $v(kdt)\rightarrow r^\ast$ as $k\rightarrow\infty$, where $r^\ast$ is the \textit{best} admissible approximation of $r$ satisfying $\delta \leq c_i(\bar{x}_{r^\ast},r^\ast) < \xi$ for some $i$. 
\end{problem}

\section{Theoretical Analysis}\label{sec:DTI}
When ERG \eqref{eq:ERGClassic} is implemented in continuous time with DSM \eqref{eq:DSM} and AF \eqref{eq:AF}, the repulsive vector field given in \eqref{eq:AFRepulsive} ensures that $c_i\left(\bar{x}_{v(t)},v(t)\right)\geq\delta,~i=1,\cdots,n_c$ for all $t$. Thus, since  $\delta\in\mathbb{R}_{>0}$, it can be concluded that the set $\mathcal{D}$ given below\textemdash which represents the set of steady-state admissible equilibria\textemdash is invariant:
\begin{align}\label{eq:setD}
\mathcal{D}=\left\{v|c_i(\bar{x}_v,v)\geq\delta,~i=1,\cdots,n_c\right\}.
\end{align}

However, when the updates of applied reference $v$ are computed in discrete time as in \eqref{eq:ERGDiscrete}, discretization errors can violate the invariance of the set $\mathcal{D}$; see Figure \ref{fig:DiscretizationError} for a geometric illustration. Note that one can address this issue by using a large $\delta$. This approach is not desired as using a large $\delta$ implies that safe equilibria could be treated unsafe, which can hamper desirable properties of ERG.


\begin{figure}
    \centering
    \includegraphics[width=\linewidth]{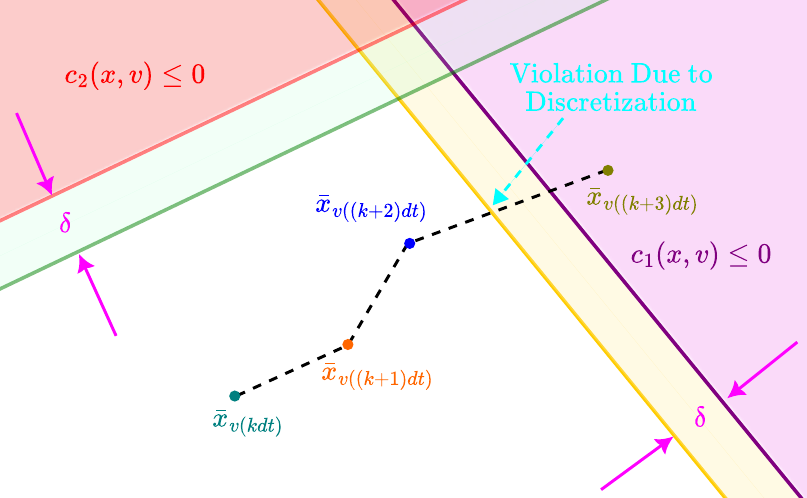}
    \caption{Violation of the invariance of the set $\mathcal{D}$ due to discretization errors.}
    \label{fig:DiscretizationError}
\end{figure}

From \eqref{eq:ERGDiscrete}, another approach to avoid violations due to discretization errors is to limit the change of the applied reference $v(t)$, but at the costs of performance degradation. Limiting $\big\Vert v(kdt)-v((k-1)dt)\big\Vert$ can be done by either using a saturation function or selecting $dt$ and $\kappa$ such that $dt\cdot\kappa$ is sufficiently small at all times. Using a saturation function is limited to simulation-based studies, e.g., \cite{hosseinzadeh2019explicit,HosseinzadehLetter}, as: i) it is unclear how one can determine a safe saturation level with minimum performance degradation; ii) a safe saturation level for one operating scenario might be unsafe for other scenarios; and iii) when constraints are far from being violated, using a saturation function can unnecessarily degrade the performance.

Regarding the second approach (i.e., selecting $dt$ and $\kappa$ such that $dt\cdot\kappa$ is sufficiently small), prior work (e.g., \cite{HosseinzadehMED,Hosseinzadeh2022_ROTEC,Hosseinzadeh2023RobustTermination}) mainly focuses on using a small sampling period $dt$. Note that we cannot arbitrarily reduce the sampling period $dt$, as: i) small $dt$ requires frequent computations which can be challenging for real-time scheduling; see, e.g., \cite{Roy2021,Wilhelm2008}; and ii) there is a minimum time required for computing the applied reference via \eqref{eq:ERGDiscrete}, which imposes a lower limit on the sampling period $dt$. Using a small $\kappa$ is not a reliable and efficient approach, as: i) it is unclear how the value of $\kappa$ should be determined to avoid violations due to discretization errors (safe $\kappa$ for one operating scenario might be unsafe for a different scenario); and ii) when large changes in the applied reference are allowed, using a small constant $\kappa$ can unnecessarily prevent that and consequently degrade the performance.

Instead of using a constant $\kappa$, this paper proposes using a dynamic $\kappa$ whose value at any sampling instant $kdt$ is selected such that constraint satisfaction and convergence are guaranteed at all times, with minimum performance degradation. For this purpose, first, the following lemma provides a condition on the design parameter $\kappa$, under which the invariance of the set $\mathcal{D}$ given in \eqref{eq:setD} is guaranteed at all times, when discrete-time ERG given in \eqref{eq:ERGDiscrete} is utilized to compute the updates of the applied reference.



\begin{lemma}\label{Lemma:Condition}
Let $v((k-1)dt)\in\mathcal{D}$. At sampling instant $kdt$, let the design parameter $\kappa(kdt)$ be selected such that the following inequality is satisfied:
\begin{align}\label{eq:ConditionKappa}
\kappa(kdt)\leq\frac{\bar{\vartheta}}{\mu \cdot dt\cdot \max\left\{\left\Vert g(x(kdt),v((k-1)dt),r)\right\Vert,\eta_2\right\}},
\end{align}
where $\eta_2\in\mathbb{R}_{>0}$ is a smoothing factor, and $\bar{\vartheta}=\min_{i\in\{1,\cdots,n_c\}}\{\vartheta_i\}$ with $\vartheta_i\in\mathbb{R}_{\geq0}$ being the Euclidean distance between the equilibrium point $\bar{x}_{v((k-1)dt)}$ and the $i$th tightened constraint, i.e., 
\begin{align}\label{eq:Distance}
\vartheta_i=\left\{
\begin{array}{ll}
     & \min\,\left\Vert \bar{x}_{v((k-1)dt)}-\theta\right\Vert \\
    \text{s.t.} & c_i(\theta,v((k-1)dt))=\delta
\end{array}
\right..
\end{align}

Then, $v(kdt)\in\mathcal{D}$.
\end{lemma}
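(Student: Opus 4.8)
The goal is to show that a single discrete-time update starting from $v((k-1)dt)\in\mathcal{D}$ stays in $\mathcal{D}$, i.e., that $c_i(\bar{x}_{v(kdt)},v(kdt))\geq\delta$ for all $i$, under the step-size bound \eqref{eq:ConditionKappa}. The natural approach is to control how far the \emph{equilibrium} $\bar{x}_v$ moves when $v$ moves by one Euler step, and then compare that displacement to the distance $\vartheta_i$ from the old equilibrium to the tightened constraint surface $\{c_i(\cdot,v((k-1)dt))=\delta\}$.

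First I would quantify the change in the applied reference: from \eqref{eq:ERGDiscrete},
\begin{align*}
\big\Vert v(kdt)-v((k-1)dt)\big\Vert = dt\cdot\kappa(kdt)\cdot\big\Vert g(x(kdt),v((k-1)dt),r)\big\Vert \leq dt\cdot\kappa(kdt)\cdot\max\big\{\big\Vert g(\cdot)\big\Vert,\eta_2\big\}.
\end{align*}
Next, using Assumption \ref{Assumption:Lipchitz} (the bound $\Vert\nabla_v\bar{x}_v\Vert\leq\mu$) together with the mean value inequality along the segment joining $v((k-1)dt)$ and $v(kdt)$, I would conclude
\begin{align*}
\big\Vert \bar{x}_{v(kdt)}-\bar{x}_{v((k-1)dt)}\big\Vert \leq \mu\cdot\big\Vert v(kdt)-v((k-1)dt)\big\Vert \leq \mu\cdot dt\cdot\kappa(kdt)\cdot\max\big\{\big\Vert g(\cdot)\big\Vert,\eta_2\big\}.
\end{align*}
Substituting the bound \eqref{eq:ConditionKappa} on $\kappa(kdt)$, the right-hand side is at most $\bar{\vartheta}=\min_i\vartheta_i$. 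Hence the new equilibrium $\bar{x}_{v(kdt)}$ lies within Euclidean distance $\bar{\vartheta}$ of the old equilibrium $\bar{x}_{v((k-1)dt)}$.

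Finally I would close the argument geometrically. Since $v((k-1)dt)\in\mathcal{D}$, we have $c_i(\bar{x}_{v((k-1)dt)},v((k-1)dt))\geq\delta$ for every $i$; by \eqref{eq:Distance}, $\vartheta_i$ is exactly the distance from $\bar{x}_{v((k-1)dt)}$ to the boundary set $\{\theta:c_i(\theta,v((k-1)dt))=\delta\}$, and by convexity of the admissible set $\{\theta:c_i(\theta,v((k-1)dt))\geq\delta\}$ (which follows from the standing convexity assumption on the constraint sets, applied with the tightened level $\delta$), the whole Euclidean ball of radius $\vartheta_i$ centered at $\bar{x}_{v((k-1)dt)}$ is contained in that admissible set. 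Because $\bar{x}_{v(kdt)}$ lies in the ball of radius $\bar{\vartheta}\leq\vartheta_i$, it belongs to $\{\theta:c_i(\theta,v((k-1)dt))\geq\delta\}$, i.e., $c_i(\bar{x}_{v(kdt)},v((k-1)dt))\geq\delta$ for all $i$. The remaining gap is to pass from the constraint evaluated at the \emph{old} reference $v((k-1)dt)$ to the constraint evaluated at the \emph{new} reference $v(kdt)$; I expect this to be the main obstacle, and I would handle it by noting that the repulsion term $\rho_r$ in \eqref{eq:AFRepulsive} is designed precisely so that the update direction does not decrease the margin below $\delta$ (mirroring the continuous-time invariance argument recalled before the lemma), so that $c_i(\bar{x}_{v(kdt)},v(kdt))\geq\delta$ as well, giving $v(kdt)\in\mathcal{D}$.
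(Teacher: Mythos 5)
Your proposal follows essentially the same route as the paper's proof: bound the equilibrium displacement by $\mu\cdot dt\cdot\kappa\cdot\max\{\left\Vert g\right\Vert,\eta_2\}$ using Assumption \ref{Assumption:Lipchitz} and the update law \eqref{eq:ERGDiscrete}, then observe that condition \eqref{eq:ConditionKappa} forces this displacement to be at most $\bar{\vartheta}$, the distance from $\bar{x}_{v((k-1)dt)}$ to the nearest tightened constraint. The ``remaining gap'' you flag---that $\vartheta_i$ measures distance to the constraint evaluated at the \emph{old} reference while membership in $\mathcal{D}$ requires $c_i(\bar{x}_{v(kdt)},v(kdt))\geq\delta$ at the \emph{new} one---is a genuine subtlety, but the paper's own proof leaves it equally implicit, resolving the geometric step only by appeal to Figure \ref{fig:Condition}, so your treatment is no less complete than the original.
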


\begin{proof}
On the one hand, Assumption \ref{Assumption:Lipchitz} establishes a relationship that connects the changes in the equilibrium points to variations in the applied reference, as follows:
\begin{align}
\left\Vert\bar{x}_{v(kdt)} - \bar{x}_{v((k-1)dt)}\right\Vert &\leq\mu\left\Vert v(kdt)-v((k-1)dt)\right\Vert,
\end{align}
which according to \eqref{eq:ERGDiscrete} implies that 
\begin{align}
&\left\Vert\bar{x}_{v(kdt)} - \bar{x}_{v((k-1)dt)}\right\Vert\leq\mu\cdot dt\cdot\kappa\cdot\left\Vert g(x(kdt),v((k-1)dt),r)\right\Vert\nonumber\\
&\leq\mu\cdot dt\cdot\kappa\cdot\max\left\{\left\Vert g(x(kdt),v((k-1)dt),r)\right\Vert,\eta_2\right\}.
\label{eq:relationship2}
\end{align}


On the other hand, given $v((k-1)dt)\in\mathcal{D}$, imposing $\big\Vert\bar{x}_{v(kdt)} - \bar{x}_{v((k-1)dt)}\big\Vert\leq\bar{\vartheta}$ guarantees that $v(kdt)\in\mathcal{D}$, where $\bar{\vartheta}$ is the smallest distance between the equilibrium point $\bar{x}_{v((k-1)dt)}$ and the constraints. See Figure \ref{fig:Condition} for a geometric illustration.

Thus, no further effort is needed to show that if $\kappa$ satisfies condition \eqref{eq:ConditionKappa} at sampling instant $kdt$, then $v(kdt)\in\mathcal{D}$.
\end{proof}

\begin{figure}
    \centering
    \includegraphics[width=\linewidth]{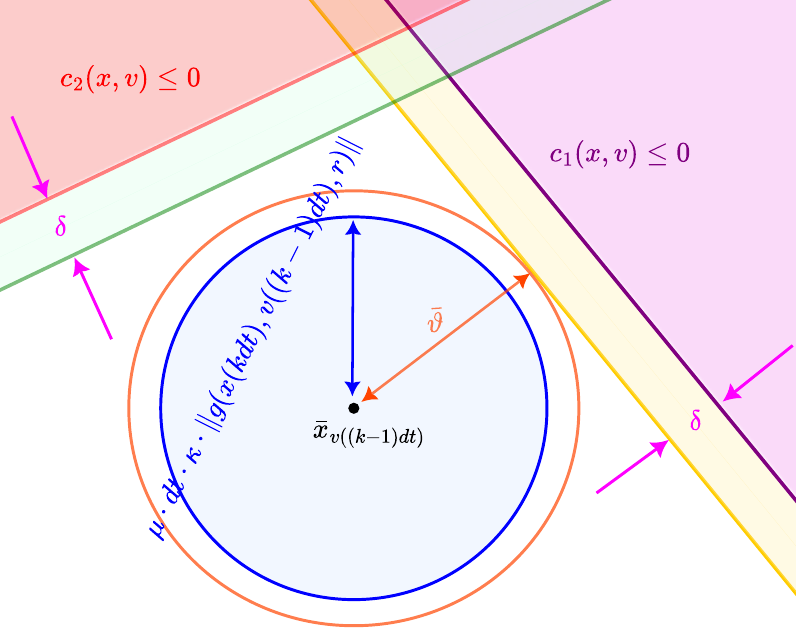}
    \caption{Geometric illustration of the impact of the design parameter $\kappa$ on the invariance of the set $\mathcal{D}$.}
    \label{fig:Condition}
\end{figure}

\begin{remark}\label{remark:SteadyStateDelta}
Starting from the initial condition $v(0)\in\mathcal{D}$, Lemma \ref{Lemma:Condition} indicates that if the updates of the applied reference are obtained via \eqref{eq:ERGDiscrete}, the applied reference remains steady-state admissible at all times; that is, the set $\mathcal{D}$ is invariant. Thus, it can be concluded \cite{nicotra2018explicit,hosseinzadeh2019explicit} that there exits $\epsilon\in\mathbb{R}_{>0}$ such that $\Delta\left(\bar{x}_{v(kdt)},v(kdt)\right)\geq\epsilon$; this property implies that $\Delta\left(x(t),v(t)\right)$ cannot remain equal to zero. 
\end{remark}


The following theorem shows that if the updates of the applied reference are obtained as in \eqref{eq:ERGDiscrete}, $v(kdt)$ asymptotically converges to $r$ if it is strictly steady-state admissible, or else to the \textit{best} admissible approximation of $r$.

\begin{theorem}\label{theorem:Convergence}
Consider the pre-stabilized system \eqref{eq:system} which is subject to constraints \eqref{eq:constraints}. Suppose that the discrete-time ERG \eqref{eq:ERGDiscrete} is utilized to obtain the applied reference at each sampling instant $kdt$. Let the initial condition $v(0)$ be such that $c_i(\bar{x}_{v(0)},v(0))\geq\delta,~\forall i$. Then, $v(kdt)\rightarrow r$ as $k\rightarrow\infty$ if $r$ is strictly steady-state admissible, or else $v(kdt)\rightarrow r^\ast$ as $k\rightarrow\infty$, where $r^\ast$ is the best admissible approximation of $r$ satisfying $\delta \leq c_i(\bar{x}_{r^\ast},r^\ast) < \xi$ for some $i$.
\end{theorem}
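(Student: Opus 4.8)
The plan is to mirror the continuous-time ERG convergence argument, with the invariance of $\mathcal{D}$ now supplied by Lemma~\ref{Lemma:Condition} and a discrete, LaSalle-type accumulation-point argument replacing the continuous one. The proof rests on three ingredients: (a) a uniform positive margin at the equilibria together with nonnegativity of the DSM along the closed-loop trajectory; (b) a structural property of the navigation field $\rho(\cdot,r)$ — namely that it has no spurious zeros; and (c) a discrete-time ``progress'' estimate showing that the applied reference cannot stall away from its target.

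First I would record that, since $c_i(\bar{x}_{v(0)},v(0))\ge\delta$ for all $i$ (i.e. $v(0)\in\mathcal{D}$) and $\kappa(kdt)$ is chosen to satisfy \eqref{eq:ConditionKappa} at every sampling instant, Lemma~\ref{Lemma:Condition} gives $v(kdt)\in\mathcal{D}$ for all $k\in\mathbb{Z}_{\ge0}$, whence by Remark~\ref{remark:SteadyStateDelta} there is $\epsilon\in\mathbb{R}_{>0}$ with $\Delta\big(\bar{x}_{v(kdt)},v(kdt)\big)\ge\epsilon$ for all $k$. Next I would show that the DSM along the trajectory stays nonnegative: because $g=\Delta\rho$, the increment in \eqref{eq:ERGDiscrete} obeys $\|v(kdt)-v((k-1)dt)\|\le dt\,\kappa(kdt)\,\Delta\big(x(kdt),v((k-1)dt)\big)\,\|\rho\|$, so the applied reference effectively freezes as the DSM vanishes, while the pre-stabilized system \eqref{eq:system}, through the Lyapunov bound \eqref{eq:Lyapunov}, drives $x$ toward $\bar{x}_{v}$ where the margin is at least $\epsilon$; this prevents $\Delta\big(x(t),v(t)\big)$ from crossing zero and simultaneously yields property~i).

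For convergence I would regard $\{v(kdt)\}$ as the orbit of a discrete system driven by the direction field $\rho(\cdot,r)$ with a nonnegative, state-dependent gain $\lambda_k:=dt\,\kappa(kdt)\,\Delta\big(x(kdt),v((k-1)dt)\big)$. One first checks that $\{v(kdt)\}$ is bounded (it stays in a compact sublevel set of the admissible region, since $\rho_a$ pulls toward $r$ and $\rho_r$ is nonzero only near the constraints). Then comes the structural step: using the decomposition $\rho=\rho_a+\rho_r$ of \eqref{eq:AF}--\eqref{eq:AFRepulsive} together with the convexity of $\{x:c_i(x,v)\ge0\}$, the only zeros of $\rho(\cdot,r)$ on $\mathcal{D}$ are $v=r$ when $r$ is strictly steady-state admissible and $v=r^\ast$ otherwise — the attraction and repulsion terms balance exactly at the best admissible approximation and nowhere else; this carries over essentially verbatim from the continuous-time ERG design. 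Finally, the LaSalle-type step: let $v_\infty$ be any accumulation point of $\{v(kdt)\}$. Since the reference moves by $\lambda_k\,\rho\big(v((k-1)dt),r\big)$ and, by ingredient~(a) and Remark~\ref{remark:SteadyStateDelta}, the DSM cannot remain at zero, over every sufficiently long window the state recovers toward the current equilibrium and reinflates $\Delta$ by a uniform amount; hence $\sum_k\lambda_k=\infty$, so the orbit can be asymptotically stationary only if $\rho(v_\infty,r)=0$, i.e. $v_\infty\in\{r\}$ (strictly admissible case) or $v_\infty\in\{r^\ast\}$ (otherwise). Uniqueness of the limit then follows because near the target $\rho=\rho_a$ points straight at it, so $\|v(kdt)-v_\infty\|$ is eventually non-increasing.

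The step I expect to be the main obstacle is this last one: turning the continuous-time fact ``$\Delta$ does not remain zero'' into a quantitative discrete statement that rules out $v(kdt)$ stalling at a non-target point. Concretely, one must bound uniformly in $k$ how much $V\big(x(kdt),v(kdt)\big)$ decreases between consecutive updates — using \eqref{eq:Lyapunov}, the boundedness of the reference increment guaranteed by \eqref{eq:ConditionKappa}, and continuity of $\Gamma$ — and then show this recovery makes $\lambda_k$ bounded away from zero infinitely often, so that the accumulated displacement of $v$ along $\rho$ is infinite unless $\rho(v_\infty,r)=0$. The no-spurious-equilibria claim of the structural step is also delicate, but, as noted, it is inherited from the established continuous-time ERG theory once the convexity assumption is in force.
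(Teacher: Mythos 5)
Your proposal follows essentially the same route as the paper's own proof: a LaSalle-type descent argument driving $\left\Vert\rho(v,r)\right\Vert^2$ to zero, combined with the characterization of the zeros of $\rho$ on $\mathcal{D}$ ($v=r$ when $r$ is strictly steady-state admissible, and otherwise the unique balance point satisfying $\rho_a(r^\ast,r)=-\rho_r(r^\ast)$, obtained from $\left\Vert\rho_a\right\Vert=\left\Vert\rho_r\right\Vert=1$ and convexity). The ``non-stalling'' issue you flag as the main obstacle is precisely the point the paper disposes of by citing Remark~\ref{remark:SteadyStateDelta} (the DSM cannot remain at zero) before invoking the invariance principle, so your plan is, if anything, more explicit about that step than the published argument.
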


\begin{proof}
Consider the following Lyapunov function:
\begin{align}
W\left(v(kdt)\right)=\sum_{\omega=v(kdt)}^{r}\left\Vert \rho\left(\omega,r\right)\right\Vert^2,
\end{align}

The time-difference $\Delta W(kdt):=W\left(v((k+1)dt)\right)-W\left(v(kdt)\right)$ is
\begin{align}\label{eq:TimeDifference1}
\Delta W(kdt)=&\sum_{\omega=v((k+1)dt)}^{r}\left\Vert \rho\left(\omega,r\right)\right\Vert^2-\sum_{\omega=v(kdt)}^{r}\left\Vert \rho\left(\omega,r\right)\right\Vert^2\nonumber\\
=&-\left\Vert \rho\left(v(kdt),r\right)\right\Vert^2\leq0,
\end{align}
which implies that the Lyapunov function $W\left(v(kdt)\right)$ is decreasing along the AF  denoted by $\rho\left(v(kdt),r\right)$. At this stage, invoking the LaSalle invariance principle \cite{Khalil}, we show that the only entire trajectory of the applied reference that satisfies $\Delta W(kdt)\equiv0$ is $v(kdt)=r$ if $r$ is strictly steady-state admissible, and is $v(kdt)=r^\ast$ otherwise.


Consider the following two cases. \\
\noindent$\bullet$ Case I\textemdash $r$ is strictly steady-state admissible: In this case, according to Remark \ref{remark:SteadyStateDelta} and since $\rho_r\left(v(kdt)\right)=\textbf{0},~\forall k$, it is evident from \eqref{eq:AFAttraction} that $\left\Vert \rho\left(v(kdt),r\right)\right\Vert^2=0$ implies that $v(kdt)=r$.



\begin{figure}[!t]
    \centering
    \includegraphics[width=6.4cm]{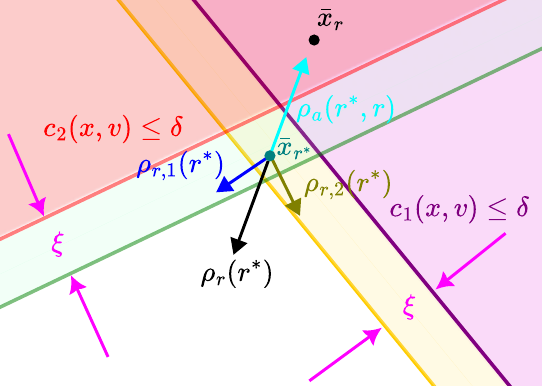}
    \caption{Geometric illustration of $r^\ast$ and the attraction and repulsion terms.}
    \label{fig:Convergence}
\end{figure}

\noindent$\bullet$ Case II\textemdash $r$ is not strictly steady-state admissible: In this case, according to Remark \ref{remark:SteadyStateDelta}, to ensure that $\left\Vert \rho\left(r^\ast,r\right)\right\Vert^2=0$, we only need to show that $r^\ast$ is such that $\rho_a(r^\ast,r)=-\rho_r(r^\ast)$. On the one hand, given a sufficiently small $\eta_1\in\mathbb{R}_{>0}$, it follows from \eqref{eq:AFAttraction} that $\left\Vert\rho_a(v,r)\right\Vert=1,~\forall v\in\mathcal{D}$.  On the other hand, according to \eqref{eq:AFRepulsive}, $\left\Vert\rho_r(v)\right\Vert=1$ can be only satisfied for $v$ satisfying $\delta\leq c_i(\bar{x}_v,v)<\xi$ for some $i$. Thus, since $\rho_r(v)$ is always directed along the constraint gradient, and according to convexity of the set $\mathcal{D}$, there exists a unique $r^\ast$ satisfying $\rho_a(r^\ast,r)=-\rho_r(r^\ast)$. Figure \ref{fig:Convergence} presents a geometric illustration for Case II. 
\end{proof}

Next theorem shows that the discrete-time ERG given in \eqref{eq:ERGDiscrete} guarantees constraint satisfaction at all times.

\begin{theorem}\label{theorem:ConstraintSatisfaction}
Consider the discrete-time ERG given in \eqref{eq:ERGDiscrete}. Let the initial conditions $x(0)$ and $v(0)$ be such that $c_i(x(0),v(0))\geq0,~\forall i$. Then, constraints \eqref{eq:constraints} are satisfied at all times (that is $c_i(x(t),v(t))\geq0,~\forall i$ for $t\geq0$), if the design parameter $\kappa(kdt)$ is selected such that the following inequality is satisfied
\begin{align}\label{eq:ConditionKappa2}
\kappa(kdt)\leq\frac{\max\{\frac{\sqrt{m_1}}{\sqrt{m_1}+\sqrt{m_2}}\bar{\vartheta}-\frac{\sqrt{m_2}}{\sqrt{m_1}+\sqrt{m_2}}\left\Vert x(kdt)-\bar{x}_{v((k-1)dt)}\right\Vert,0\}}{\mu \cdot dt\cdot \max\left\{\left\Vert g(x(kdt),v((k-1)dt),r)\right\Vert,\eta_2\right\}},
\end{align} 
where $\bar{\vartheta}$ and $\eta_2$ are as in Lemma \ref{Lemma:Condition}.

\end{theorem}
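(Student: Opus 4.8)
The plan is to argue that the condition \eqref{eq:ConditionKappa2} forces the sublevel set of the Lyapunov function associated with the new equilibrium to remain inside the constraint-admissible region, so that the zero-order-hold trajectory on $[kdt,(k+1)dt)$ never leaves it. First I would establish an inductive invariant: at each sampling instant $kdt$ the state satisfies $V\bigl(x(kdt),v((k-1)dt)\bigr)\leq \Gamma\bigl(v((k-1)dt)\bigr)$ restricted appropriately, or more directly that $x(kdt)$ lies in the region where $c_i(x(kdt),v((k-1)dt))\geq 0$ for all $i$; the base case is the hypothesis $c_i(x(0),v(0))\geq 0$. The key quantitative step is to bound the distance from the \emph{new} equilibrium $\bar{x}_{v(kdt)}$ to the constraint boundary from below. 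Using the triangle inequality and Lemma \ref{Lemma:Condition}'s bound \eqref{eq:relationship2}, we have $\bigl\Vert \bar{x}_{v(kdt)}-\bar{x}_{v((k-1)dt)}\bigr\Vert \leq \mu\cdot dt\cdot\kappa(kdt)\cdot\max\{\Vert g\Vert,\eta_2\}$, and plugging in \eqref{eq:ConditionKappa2} shows this displacement is at most $\tfrac{\sqrt{m_1}}{\sqrt{m_1}+\sqrt{m_2}}\bar{\vartheta}-\tfrac{\sqrt{m_2}}{\sqrt{m_1}+\sqrt{m_2}}\Vert x(kdt)-\bar{x}_{v((k-1)dt)}\Vert$.

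Next I would translate these Euclidean bounds into a Lyapunov-level statement using \eqref{eq:Lyapunov}. From the lower bound in \eqref{eq:Lyapunov}, the distance from $\bar x_{v(kdt)}$ to the tightened constraint $c_i=\delta$ is at least $\vartheta_i$ minus the equilibrium displacement, hence at least $\bar\vartheta$ minus that displacement; combined with the $\kappa$ bound this residual distance $d$ satisfies $d \geq \tfrac{\sqrt{m_1}}{\sqrt{m_1}+\sqrt{m_2}}\bar{\vartheta}+\tfrac{\sqrt{m_2}}{\sqrt{m_1}+\sqrt{m_2}}\Vert x(kdt)-\bar{x}_{v((k-1)dt)}\Vert$ — wait, more carefully, the point is that the admissible Lyapunov sublevel value $\Gamma(v(kdt))$ satisfies $\sqrt{\Gamma(v(kdt))/m_1}\geq d$ while the current state's Lyapunov value obeys $\sqrt{V(x(kdt),v(kdt))/m_2}\leq \Vert x(kdt)-\bar x_{v(kdt)}\Vert \leq \Vert x(kdt)-\bar x_{v((k-1)dt)}\Vert + \mu\cdot dt\cdot\kappa\cdot\max\{\Vert g\Vert,\eta_2\}$. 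The numerator of \eqref{eq:ConditionKappa2} is precisely engineered so that $\sqrt{m_2}\bigl(\Vert x(kdt)-\bar x_{v((k-1)dt)}\Vert + \mu\cdot dt\cdot\kappa\cdot\max\{\Vert g\Vert,\eta_2\}\bigr)\leq \sqrt{m_1}\bigl(\bar\vartheta - \mu\cdot dt\cdot\kappa\cdot\max\{\Vert g\Vert,\eta_2\}\bigr)$, i.e. $\sqrt{V(x(kdt),v(kdt))} \leq \sqrt{\Gamma(v(kdt))}$ — using also that by Lemma \ref{Lemma:Condition} $v(kdt)\in\mathcal D$ so $\Gamma(v(kdt))$ is well-defined and bounded below. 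Therefore $x(kdt)$ lies in the admissible sublevel set $\{z: V(z,v(kdt))\leq \Gamma(v(kdt))\}$.

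Then I would invoke the zero-order-hold structure: on $[kdt,(k+1)dt)$ the applied reference is frozen at $v(kdt)$, and since \eqref{eq:system} is pre-stabilized, $V(x(t),v(kdt))$ is non-increasing along the flow (this is the defining property of the Lyapunov function from the converse theorem). Hence $V(x(t),v(kdt))\leq V(x(kdt),v(kdt))\leq \Gamma(v(kdt))$ for all $t\in[kdt,(k+1)dt)$, which by the definition \eqref{eq:OptimizationLyapunovBasedERG}–\eqref{eq:DSM} of $\Gamma$ gives $c_i(x(t),v(kdt))\geq 0$ for all $i$ and all such $t$. In particular this re-establishes the invariant at $t=(k+1)dt$ (with reference $v(kdt)$ now playing the role of $v((k-1)dt)$ at the next step), closing the induction; chaining over all $k\geq 0$ yields constraint satisfaction for all $t\geq 0$.

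The main obstacle I anticipate is the careful bookkeeping connecting the two different Lyapunov functions $V(\cdot,v((k-1)dt))$ and $V(\cdot,v(kdt))$ across a sampling instant: one must control how much $V$ can jump when the reference — and hence the equilibrium — changes, and the split into the two terms weighted by $\sqrt{m_1}/(\sqrt{m_1}+\sqrt{m_2})$ and $\sqrt{m_2}/(\sqrt{m_1}+\sqrt{m_2})$ is exactly the device that balances the "room to the boundary" against the "current Lyapunov excursion plus equilibrium shift." I would want to verify that the $\max\{\cdot,0\}$ in the numerator is harmless — when it is active (i.e. the quantity would be negative), condition \eqref{eq:ConditionKappa2} forces $\kappa(kdt)=0$, so $v(kdt)=v((k-1)dt)$, the equilibrium does not move, and the invariant is trivially preserved. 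A secondary subtlety is ensuring $\Gamma(v(kdt))$ is strictly positive so the sublevel set has nonempty interior; this follows from $v(kdt)\in\mathcal D$ via Lemma \ref{Lemma:Condition} and Remark \ref{remark:SteadyStateDelta}, since $\delta>0$ guarantees the equilibrium is strictly interior to the constraint set.
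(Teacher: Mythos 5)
Your proposal is correct and follows essentially the same route as the paper: the same split into constraint satisfaction at sampling instants (via the two-sided bounds \eqref{eq:Lyapunov}, the triangle inequality relating $\bar{x}_{v(kdt)}$, $\bar{x}_{v((k-1)dt)}$, $x(kdt)$ and the constraint boundary, and the bound \eqref{eq:relationship2} on the equilibrium shift) and between sampling instants (via the zero-order hold freezing $\Gamma$ and the monotone decrease of $V$ for fixed $v$). Your bookkeeping is in fact slightly cleaner than the paper's in two spots\textemdash you bound $\left\Vert x(kdt)-\bar{x}_{v(kdt)}\right\Vert$ and the residual distance to the boundary separately rather than manipulating a single inequality with absolute values, and you explicitly close the induction in the $\max\{\cdot,0\}=0$ case (where $\kappa(kdt)=0$ and the invariant carries over from the previous interval), a branch the paper glosses over.
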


\begin{proof}
To show that constraints are satisfied at all times, it is sufficient \cite{hosseinzadeh2019explicit,nicotra2018explicit} to show that $\Delta\left(x(t),v(t)\right)\geq0$ for all $t\geq0$. This proof will be conducted in two parts: i) constraint satisfaction at sampling instants (i.e., the instants at which the applied reference is updated); and ii) constraint satisfaction between sampling instants.


Part I\textemdash Once the applied reference is updated, the threshold value $\Gamma\left(vdt\right)$ and the Lyapunov function $V\left(x(kdt),v(kdt)\right)$ may change such that the Lyapunov function is not entirely contained in the constraints anymore; see Figure \ref{fig:SwitchingLyapunov} for a geometric illustration. Thus, at any sampling instant $kdt$, to ensure constraint satisfaction, it is sufficient to determine the applied reference $v(kdt)$ such that the following inequality is satisfied:
\begin{align}\label{eq:Condition1}
V(x(kdt),v(kdt))\leq\min\limits_{i\in\{1,\cdots,n_c\}}\{\hat{\Gamma}_i\left(v(kdt)\right)\},
\end{align}
where\footnote{According to \eqref{eq:GammaOpt1}, it can be concluded that $\min_{i\in\{1,\cdots,n_c\}}\{\hat{\Gamma}_i\left(v(kdt)\right)\}\leq\Gamma\left(v(kdt)\right)$. Thus, the satisfaction of \eqref{eq:Condition1} implies that $V(x(kdt),v(kdt))\leq\Gamma\left(v(kdt)\right)$, or equivalently $\Delta(x(kdt),v(kdt))\geq0$.}
\begin{align}
\hat{\Gamma}_i\left(v(kdt)\right)=\left\{
\begin{array}{cc}
     &  \min\limits_{z\in\mathbb{R}^n} m_1\left\Vert z-\bar{x}_{v(kdt)}\right\Vert^2\\
    \text{s.t.} & c_i(x,v)\leq0,
\end{array}
\right..
\end{align}

Let $i^\ast$ be the index of the smallest $\hat{\Gamma}_i\left(v(kdt)\right)$, i.e., $i^\ast=\arg\min_{i}\{\hat{\Gamma}_i\left(v(kdt)\right)\}$. According to \eqref{eq:Lyapunov}, it is easy to show that \eqref{eq:Condition1} is satisfied if the following inequality is satisfied:
\begin{align}\label{eq:Condition2}
m_2\left\Vert x(kdt)-\bar{x}_{v(kdt)}\right\Vert^2\leq m_1\left\Vert z^\ast-\bar{x}_{v(kdt)}\right\Vert^2,
\end{align}
or equivalently
\begin{align}\label{eq:Condition3}
\sqrt{m_2}\left\Vert x(kdt)-\bar{x}_{v(kdt)}\right\Vert\leq \sqrt{m_1}\left\Vert z^\ast-\bar{x}_{v(kdt)}\right\Vert,
\end{align}
where $z^\ast$ is a point satisfying $c_{i^\ast}(z^\ast,v(kdt))=0$.

Adding and subtracting $\bar{x}_{v((k-1)dt)}$ to the norms in both sides of \eqref{eq:Condition3} yields:
\begin{align}\label{eq:Condition4}
&\sqrt{m_2}\left\Vert x(kdt)-\bar{x}_{v((k-1)dt)}+\bar{x}_{v((k-1)dt)}-\bar{x}_{v(kdt)}\right\Vert\nonumber\\
&\leq \sqrt{m_1}\left\Vert z^\ast-\bar{x}_{v((k-1)dt)}+\bar{x}_{v((k-1)dt)}-\bar{x}_{v(kdt)}\right\Vert. 
\end{align}

\begin{figure}[!t]
    \centering
    \includegraphics[width=8.5cm]{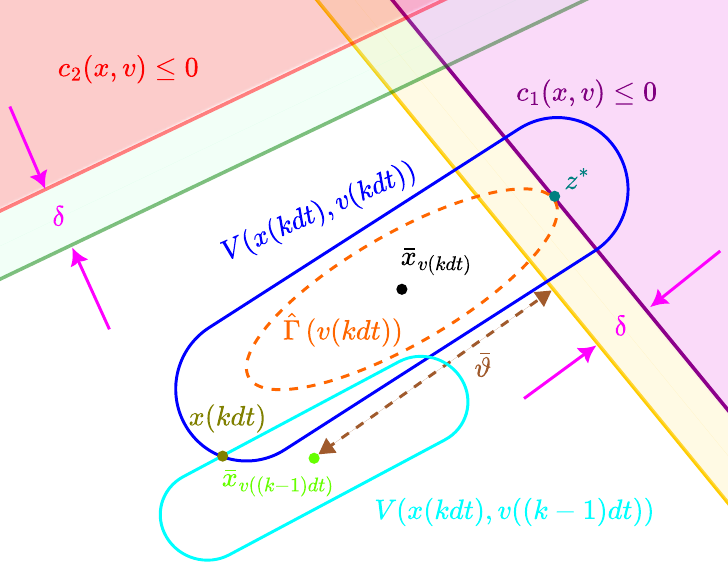}
    \caption{Geometric illustration of changes in the threshold value and the Lyapunov function at sampling instant $kdt$. Note that the Lyapunov function is not necessarily ellipsoidal, while the threshold value is determined based on an ellipsoidal lower-bound of the Lyapunov function as in \eqref{eq:GammaOpt1}.}
    \label{fig:SwitchingLyapunov}
\end{figure}

The triangle inequality implies that \eqref{eq:Condition4} is satisfied if the following inequality\footnote{We use the following inequalities to obtain \eqref{eq:Condition5}: i) $\left\Vert x(kdt)-\bar{x}_{v((k-1)dt)}+\bar{x}_{v((k-1)dt)}-\bar{x}_{v(kdt)}\right\Vert\leq\left\Vert x(kdt)-\bar{x}_{v((k-1)dt)}\right\Vert+\left\Vert\bar{x}_{v(kdt)}-\bar{x}_{v((k-1)dt)}\right\Vert$; and ii) $\Big|\left\Vert z^\ast-\bar{x}_{v((k-1)dt)}\right\Vert-\left\Vert\bar{x}_{v(kdt)}-\bar{x}_{v((k-1)dt)}\right\Vert\Big|\leq\left\Vert z^\ast-\bar{x}_{v((k-1)dt)}+\bar{x}_{v((k-1)dt)}-\bar{x}_{v(kdt)}\right\Vert$.} is satisfied:
\begin{align}\label{eq:Condition5}
& \sqrt{m_2}\left\Vert x(kdt)-\bar{x}_{v((k-1)dt)}\right\Vert+\sqrt{m_2}\left\Vert\bar{x}_{v(kdt)}-\bar{x}_{v((k-1)dt)}\right\Vert\leq\nonumber\\
&~~~~~\Big|\sqrt{m_1}\left\Vert z^\ast-\bar{x}_{v((k-1)dt)}\right\Vert-\sqrt{m_1}\left\Vert\bar{x}_{v(kdt)}-\bar{x}_{v((k-1)dt)}\right\Vert\Big|.
\end{align}

When $\left\Vert\bar{x}_{v(kdt)}-\bar{x}_{v((k-1)dt)}\right\Vert>\left\Vert z^\ast-\bar{x}_{v((k-1)dt)}\right\Vert$, inequality \eqref{eq:Condition5}
leads to contradiction. Thus, we have $\left\Vert\bar{x}_{v(kdt)}-\bar{x}_{v((k-1)dt)}\right\Vert\leq\left\Vert z^\ast-\bar{x}_{v((k-1)dt)}\right\Vert$, which implies that \eqref{eq:Condition5} is satisfied if $\big\Vert\bar{x}_{v(kdt)}-\bar{x}_{v((k-1)dt)}\big\Vert$ satisfies the following inequality:
\begin{align}\label{eq:Condition7}
\left\Vert\bar{x}_{v(kdt)}-\bar{x}_{v((k-1)dt)}\right\Vert\leq&\frac{\sqrt{m_1}}{\sqrt{m_1}+\sqrt{m_2}}\left\Vert z^\ast-\bar{x}_{v((k-1)dt)}\right\Vert\nonumber\\
&-\frac{\sqrt{m_2}}{\sqrt{m_1}+\sqrt{m_2}}\left\Vert x(kdt)-\bar{x}_{v((k-1)dt)}\right\Vert,
\end{align}
where the right-hand side in \eqref{eq:Condition7} is always positive.


As discussed above, $z^\ast$ is a point on the boundary of the constraint that yields the minimum threshold value when the applied reference is updated; note that determining this point before computing the next applied reference is not possible. However, since $\bar{\vartheta}$ given in \eqref{eq:Distance} is the minimum distance between the equilibrium point $\bar{x}_{v((k-1)dt)}$ and the tightened constraints, it can be concluded that $\bar{\vartheta}\leq\left\Vert z^\ast-\bar{x}_{v((k-1)dt)}\right\Vert$. Thus, one can calim that condition \eqref{eq:Condition7} is satisfied if the following condition is satisfied:
\begin{align}\label{eq:Condition8}
&\left\Vert\bar{x}_{v(kdt)}-\bar{x}_{v((k-1)dt)}\right\Vert\leq\max\Big\{\frac{\sqrt{m_1}}{\sqrt{m_1}+\sqrt{m_2}}\bar{\vartheta}\nonumber\\
&~~~~~~~~~~~~~~~-\frac{\sqrt{m_2}}{\sqrt{m_1}+\sqrt{m_2}}\left\Vert x(kdt)-\bar{x}_{v((k-1)dt)}\right\Vert,0\Big\}.
\end{align}

Finally, according to \eqref{eq:relationship2}, no further effort is needed to show that if the design parameter $\kappa$ satisfies condition \eqref{eq:ConditionKappa2} at sampling instant $kdt$, the condition \eqref{eq:Condition8} is satisfied, which implies that condition \eqref{eq:Condition1} is satisfied.

Part II\textemdash  Consider the time interval $t\in\left[k dt,(k+1)dt\right)$. Note that $v(t)=v(kdt)$ for $t\in\left[kdt,(k+1)dt\right)$, which implies that $\Gamma\left(v(t)\right)=\Gamma\left(v(kdt)\right)$ for $t\in\left[kdt,(k+1)dt\right)$. In other words, we have $\frac{d}{dt}\Gamma(v(t))=0$ for $t\in\left[kdt,(k+1)dt\right)$.

Given $\Delta\left(x(kdt),v(kdt)\right)=\Gamma\left(v(kdt)-V\left(x(kdt),v(kdt)\right)\right)\geq0$, it follows from continuity that $\Delta\left(x(t),v(t)\right)<0$ can only be obtained for $t^\dag\in\left(kdt,(k+1)dt\right)$ if $\Delta\left(x(t^\dag),v(t^\dag)\right)=0$ and $\frac{d}{dt}\Delta\left(x(t),v(t)\right)|_{t=t^\dag}<0$. As mentioned above, $\frac{d}{dt}\Gamma(v(t))|_{t=t^\dag}=0$. Thus, $\frac{d}{dt}\Delta\left(x(t),v(t)\right)|_{t=t^\dag}<0$ implies that $\frac{d}{dt}V(x(t),v(t))|_{t=t^\dag}>0$, which is a contradiction, as system \eqref{eq:system} is pre-stabilized.
\end{proof}

\begin{remark}
Since the right-hand side of \eqref{eq:ConditionKappa2} is lower than that of \eqref{eq:ConditionKappa}, setting $\kappa(kdt)$ such that inequality \eqref{eq:ConditionKappa2} is satisfied ensures that properties of Lemma \ref{Lemma:Condition} and Theorem \ref{theorem:ConstraintSatisfaction} hold.  
\end{remark}

\section{Discrete-Time ERG}\label{sec:MainResults}
This section proposed discrete-time ERG as the main result of this paper, and provides the corresponding pseudocode.

\begin{theorem}
Consider the pre-stabilized system \eqref{eq:system} which is subject to constraints \eqref{eq:constraints}. Suppose that the discrete-time ERG given in \eqref{eq:ERGDiscrete} is utilized to obtain the applied reference at each sampling instant $kdt$, where the design parameter $\kappa(kdt)$ is determined such that the inequality \eqref{eq:ConditionKappa2} is satisfied. Then, the following properties hold: 
\begin{itemize}
    \item constraints \eqref{eq:constraints} are satisfied at all time (i.e., $c_i(x(t),v(t))\geq0,~\forall i$ for $t\geq0$);
    \item if $r$ is \textit{strictly} steady-state admissible (i.e., $c_i(\bar{x}_{r},r)\geq\xi$ for all $i$), then $v(kdt)\rightarrow r$ as $k\rightarrow\infty$;
    \item if $r$ is not strictly steady-state admissible, then $v(kdt)\rightarrow r^\ast$ as $k\rightarrow\infty$, where $r^\ast$ is the best admissible approximation of $r$ satisfying $\delta \leq c_i(\bar{x}_{r^\ast},r^\ast) < \xi$ for some $i$. 
\end{itemize} 
\end{theorem}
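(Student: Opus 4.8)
The plan is to assemble the statement from the three results already in hand — Lemma~\ref{Lemma:Condition}, Theorem~\ref{theorem:Convergence}, and Theorem~\ref{theorem:ConstraintSatisfaction} — after checking that the single hypothesis \eqref{eq:ConditionKappa2} is strong enough to trigger all of them, and that a single pair of initial conditions (the standing assumptions of Theorems~\ref{theorem:Convergence} and \ref{theorem:ConstraintSatisfaction}, namely $c_i(\bar{x}_{v(0)},v(0))\geq\delta$ and $c_i(x(0),v(0))\geq0$ for all $i$) feeds all three. The first step is the observation already recorded in the Remark preceding this section: the numerator on the right-hand side of \eqref{eq:ConditionKappa2} never exceeds $\tfrac{\sqrt{m_1}}{\sqrt{m_1}+\sqrt{m_2}}\bar{\vartheta}\leq\bar{\vartheta}$ while the denominators coincide, so any $\kappa(kdt)$ satisfying \eqref{eq:ConditionKappa2} also satisfies \eqref{eq:ConditionKappa}. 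Hence the hypothesis of Lemma~\ref{Lemma:Condition} holds at every sampling instant, and, starting from $v(0)\in\mathcal{D}$, an induction on $k$ (each step being exactly the one-step implication of Lemma~\ref{Lemma:Condition}) yields $v(kdt)\in\mathcal{D}$ for all $k\in\mathbb{Z}_{\geq0}$; that is, the set $\mathcal{D}$ of steady-state admissible equilibria is invariant under \eqref{eq:ERGDiscrete}.

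Second, I would invoke Theorem~\ref{theorem:ConstraintSatisfaction} directly. Since $\kappa(kdt)$ satisfies \eqref{eq:ConditionKappa2} by assumption and $c_i(x(0),v(0))\geq0$ for all $i$, that theorem gives $c_i(x(t),v(t))\geq0$ for all $i$ and all $t\geq0$, which is the first bullet. Nothing new is needed here: one simply reuses the two-part argument of the proof of Theorem~\ref{theorem:ConstraintSatisfaction} — constraint satisfaction at the sampling instants via the bound \eqref{eq:Condition1} on the dynamic safety margin, and between sampling instants via the zero-order-hold/continuity argument together with pre-stabilization of \eqref{eq:system}.

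Third, for the convergence claims I would apply Theorem~\ref{theorem:Convergence}. Its proof rests on the Lyapunov function $W(v(kdt))$ and on Remark~\ref{remark:SteadyStateDelta}, and Remark~\ref{remark:SteadyStateDelta} is precisely the consequence of the invariance of $\mathcal{D}$ established in the first step; moreover, its initial-condition hypothesis $c_i(\bar{x}_{v(0)},v(0))\geq\delta$ for all $i$ is the same $v(0)\in\mathcal{D}$ used above. Therefore the LaSalle argument of Theorem~\ref{theorem:Convergence} applies verbatim and gives $v(kdt)\to r$ when $r$ is strictly steady-state admissible and $v(kdt)\to r^\ast$ otherwise, which are the second and third bullets.

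The only real subtlety — and the point I would be most careful to get right — is the dependency chain among the hypotheses: Theorem~\ref{theorem:Convergence} silently relies on Remark~\ref{remark:SteadyStateDelta}, which relies on Lemma~\ref{Lemma:Condition}, which in turn relies only on the \emph{weaker} bound \eqref{eq:ConditionKappa}. So the proof hinges entirely on first verifying $\eqref{eq:ConditionKappa2}\Rightarrow\eqref{eq:ConditionKappa}$ and on noting that the single standing condition $v(0)\in\mathcal{D}$ simultaneously supplies the initial data needed by Lemma~\ref{Lemma:Condition}, by Theorem~\ref{theorem:Convergence}, and (since $\delta>0$, so $c_i(x(0),v(0))\geq0$ is the milder of the two) is compatible with Theorem~\ref{theorem:ConstraintSatisfaction}. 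Once this bookkeeping is in place, no further computation is required and the three bullets follow immediately.
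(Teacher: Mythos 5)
Your proposal is correct and follows exactly the route the paper takes: the paper's own proof is the one-line statement that the result is ``a straightforward application of Lemma~\ref{Lemma:Condition}, Theorem~\ref{theorem:Convergence} and Theorem~\ref{theorem:ConstraintSatisfaction},'' and the key implication $\eqref{eq:ConditionKappa2}\Rightarrow\eqref{eq:ConditionKappa}$ that you identify as the linchpin is precisely what the paper records in the Remark following Theorem~\ref{theorem:ConstraintSatisfaction}. Your version simply makes explicit the bookkeeping (the induction giving invariance of $\mathcal{D}$ and the compatibility of the initial conditions) that the paper leaves implicit.
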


\begin{proof}
The proof is a straightforward application of Lemma \ref{Lemma:Condition}, Theorem \ref{theorem:Convergence} and Theorem \ref{theorem:ConstraintSatisfaction}. 
\end{proof}

\subsection{Feature of the Proposed Discrete-Time ERG}
One feature of the proposed discrete-time ERG is that one can easily implement it on computer without having any concern about violating constraints or degrading the performance. Note that the implementation of the proposed discrete-time ERG is very simple, and requires very few changes in the code of the continuous-time ERG. Algorithm \ref{alg:Pseudocode} provides the pseudocode of the proposed discrete-time ERG, where the lines unique to the proposed discrete-time ERG are shown in blue.

\begin{algorithm}[!t]
\caption{Pseudocode of the Proposed Discrete-Time ERG}\label{alg:Pseudocode}
\textbf{Input}: $x(kdt)$, $v\left((k-1)dt\right)$, $dt$, $m_1$, $m_2$, $\eta_1$, $\eta_2$, $\delta$, $\xi$, and $r$
\begin{algorithmic}[1]
\STATE {$\Gamma \gets \left\{
\begin{array}{cc}
     &  \min\limits_{z\in\mathbb{R}^n} V(z,v)\\
    \text{s.t.} & c_i(z,v)\leq0,~\forall i
\end{array}
\right.$}
\STATE {$V \gets V\left(x(kdt),v\left((k-1)dt\right)\right)$}
\STATE {$\rho_a \gets \frac{r-v}{\max\{\left\Vert r-v\right\Vert,\eta_1\}}$}
\STATE {$\rho_r \gets \sum\limits_{i=1}^{n_c}\max\left\{\frac{\xi-c_i(\bar{x}_v,v)}{\xi-\delta},0\right\}\frac{\nabla_vc_i(\bar{x}_v,v)}{\left\Vert\nabla_vc_i(\bar{x}_v,v)\right\Vert}$}
\STATE {$g \gets \left(\Gamma-V\right)\cdot\left(\rho_a+\rho_r\right)$}
{\color{blue}
\FOR{$i \gets 1 $ to $n_c$}
\STATE{$\vartheta_i\gets\left\{
\begin{array}{ll}
     & \min\,\left\Vert \bar{x}_{v((k-1)dt)}-\theta\right\Vert \\
    \text{s.t.} & c_i(\theta,v((k-1)dt))=\delta
\end{array}
\right.$}
\ENDFOR
\STATE{{\color{blue}$\vartheta \gets \min\limits_{i\in\{1,\cdots,n_c\}}\{\vartheta_i\}$}}
\STATE{$\kappa\gets \frac{\max\{\frac{\sqrt{m_1}}{\sqrt{m_1}+\sqrt{m_2}}\bar{\vartheta}-\frac{\sqrt{m_2}}{\sqrt{m_1}+\sqrt{m_2}}\left\Vert x(kdt)-\bar{x}_{v((k-1)dt)}\right\Vert,0\}}{\mu \cdot dt\cdot \max\left\{\left\Vert g\right\Vert,\eta_2\right\}}$}
}
\STATE{$v\gets v((k-1)dt)+dt\cdot\kappa\cdot g$}
\RETURN{$v$}
\end{algorithmic}
\end{algorithm}

Another feature of the proposed discrete-time ERG is that it performs well in the tradeoff space that involves performance and constraint satisfaction. In particular, the proposed discrete-time ERG uses a dynamic $\kappa$, which has some key advantages over a constant $\kappa$, such as: i) there is no need to perform offline computations to determine a constant $\kappa$ that ensures constraint satisfaction; ii) the proposed discrete-time ERG uses a large $\kappa$ whenever it does not lead to constraint violation, which means that the proposed discrete-time ERG does not unnecessarily degrade the performance when the constraints are far from being violated.




\section{Simulation Results}\label{sec:NA}
This section aims at evaluating the effectiveness of the proposed discrete-time ERG via extensive simulation studies on two examples: i) controlling a double integrator; and ii) controlling the longitudinal dynamics of an aircraft.

\subsection{Controlling a Double Integrator}
Consider the system $\ddot{x}(t)=u(t)$, which is subject to the constraint $x(t)\leq1,~\forall t\geq0$. This system can be prestabilized using a full state feedback control law as $u(t) = -K[x(t)~\dot{x}(t)]^\top+ Gv(t)$, where $K=\left[\begin{matrix}10 & 0.5\end{matrix}\right]$ is the feedback gain vector and and $G=10$ is the feedforward gain. Stability of the pre-stabilized system can be shown by using the Lyapunov function $V(x,\dot{x},v)=\begin{bmatrix}x-v  & \dot{x}\end{bmatrix}\begin{bmatrix}2.25 & -1\\-1 & 22 \end{bmatrix}\begin{bmatrix}x-v \\ \dot{x}\end{bmatrix}$, which implies that 
$m_1=2.2$ and $m_2=22$. It can be easily shown that $\left\Vert\nabla_v\bar{x}_v\right\Vert=\left\Vert\nabla_v[v~0]^\top\right\Vert=1$, which implies that $\mu=1$. Also, the set of steady-state admissible equilibria is $\mathcal{D}=\{v|v\leq1-\delta\}$.

We assume that the desired reference is $r=1.1$. We update the applied reference every $dt=0.1$ seconds, and use the following parameters to implement the discrete-time ERG: $\eta_1=0.01$, $\eta_2=0.01$, $\xi=0.045$, and $\delta=0.04$.  We use the method described in \cite{garone2018explicit,HosseinzadehECC} to compute the threshold value $\Gamma(v)$.

To assess the performance of the proposed discrete-time ERG in guaranteeing constraint satisfaction at all times, we consider five cases based on the choice of the parameter $\kappa$, including $\kappa=1$ as suggested in \cite{HosseinzadehLetter}. To provide a quantitative comparison, we consider 20,000 experiments with initial condition $x(0)=[\beta~0]^\top$, where in each experiment, $\beta$ is uniformly selected from the interval [-50,0.95]; to ensure a fair comparison, the same initial condition is applied to all five cases.

\begin{table}[t]
\centering
\caption{Double Integrator\textemdash Comparison Study}
\label{table:Comparison1}
\begin{tabular}{c|c|c}
\hline
Method   & Violation of  & Violation of  \\ 
& the Constraint  & the Invariance of   \\
& $x\leq1$  & the Set $\mathcal{D}$    \\
\hline
Proposed Discrete-   & 0 \% & 0 \%  \\ 
Time ERG    &   &  \\ 
\hline
$\kappa=0.1$    & 43.36\%    &  9.41\%  \\ 
\hline
$\kappa=0.4$    & 74.96\%   &  66.72\%  \\ 
\hline
$\kappa=0.7$    &  79.50\%   & 74.79\%  \\ 
\hline
$\kappa=1.0$  & 81.34\%   &  78.15\%  \\ 
(suggested in \cite{HosseinzadehLetter})   &   &  \\ 
\hline
\end{tabular}
\end{table}

The obtained results are summarized in TABLE \ref{table:Comparison1}, which shows the effectiveness of the proposed method in guaranteeing constraint satisfaction and ensuring invariance of set $\mathcal{D}$, when the updates of the applied reference are performed in discrete time as in \eqref{eq:ERGDiscrete}. As seen in TABLE \ref{table:Comparison1} when the ERG is implemented with a constant $\kappa$, discretization errors can lead to violation of constraints/invariance of set $\mathcal{D}$.

\begin{figure}[t]
    \centering
    \includegraphics[width=8.5cm]{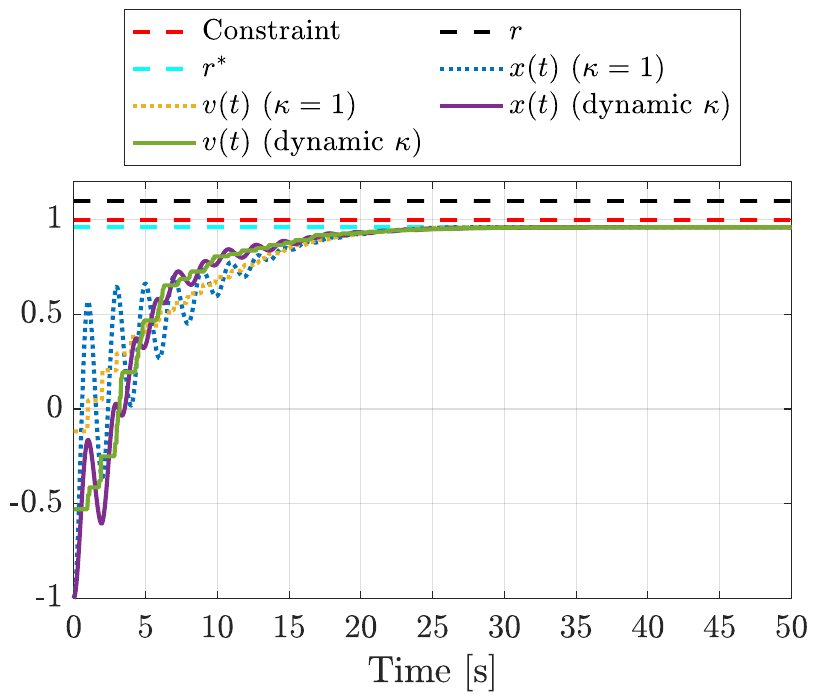}\\
    \includegraphics[width=\columnwidth]{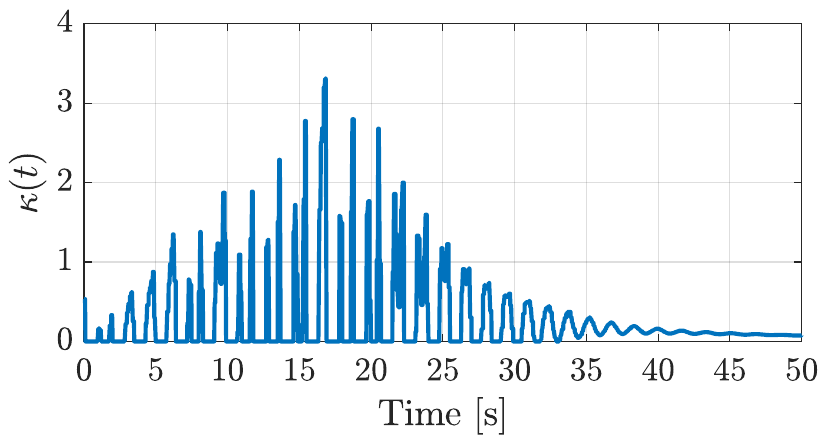}
    \caption{Top: simulation results with the proposed discrete-time ERG and with the ERG scheme with constant $\kappa=1$ (suggested in \cite{HosseinzadehLetter}); bottom: time profile of the determined $\kappa$ with the proposed discrete-time ERG.}
    \label{fig:criticalpoint}
\end{figure}

To compare the case of dynamic $\kappa$ with that of constant $\kappa$ with respect to the convergence performance defined as $\left\Vert x(t)-r\right\Vert$, we consider the initial condition $x(0)=[-1~0]^\top$ and $v(0)=-1$, which is a feasible point when $\kappa=1$. Figure \ref{fig:criticalpoint} shows the simulation results.  As seen in this figure, since the proposed discrete-time ERG increases $\kappa$ whenever needed, it converges faster than the scheme with constant $\kappa=1$; more precisely, the proposed discrete-time ERG has 9.16\% gain over the ERG with $\kappa=1$ with respect to the convergence performance.

Note that although $\kappa$ converges to 0.1 with the proposed discrete-time ERG, as reported in Table \ref{table:Comparison1}, keeping $\kappa=0.1$ can lead to constraint violation in 43.36\% of the experiments. This highlights the fact that there is no need to perform offline computations to determine a safe $\kappa$, and the proposed discrete-time ERG determines a safe $\kappa$ at any sampling instant automatically.

\subsection{Controlling the Longitudinal Dynamics of an Aircraft}
This subsection considers the problem of controlling the longitudinal dynamics of an aircraft; see Figure \ref{fig:aircraft}. The longitudinal dynamics of an aircraft can be expressed as \cite{Nicotra2015_2}:
\begin{equation}\label{eq:longitudinal dynamics}
J\ddot{\alpha} = -d_1L(\alpha) \cos(\alpha) - b \dot{\alpha} + d_2u \cos(\alpha),\end{equation}
where $\alpha$ is the angle of attack, $d_1=4$ [m] and $d_2=42$ [m] are the distances between the center of mass and the two airfoils, $b=2\times10^6$ [Nms/rad] is an estimated viscous friction coefficient, $J=4.5\times10^6$ [kgm$^2$] is the longitudinal inertia of the aircraft, $L(\alpha)=2.5\times10^5+1.5\times10^5\alpha-230\alpha^3$ is the lift generated by the main wing, and $u$ is the control force generated by the elevator airfoil. To prevent the main wing from stalling, the angle of attack is subject to the constraint $\alpha<\alpha_s$, where $\alpha_s=14.7$ [deg].

\begin{figure}[!t]
    \centering
    \includegraphics[width=8.5cm]{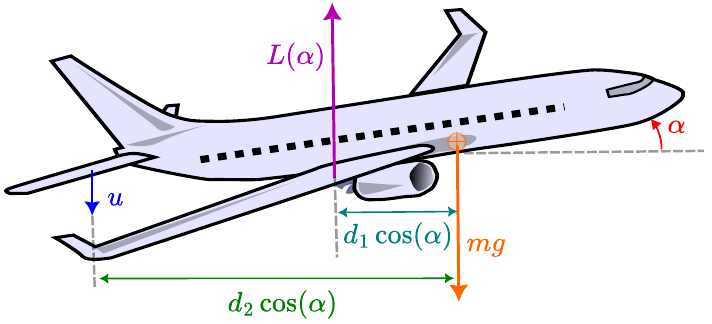}
    \caption{Longitudinal representation of a civil aircraft}
    \label{fig:aircraft}
\end{figure}

This system can be pre-stabilized using the PD control law $u = -k_P\left(\alpha-v\right)-k_D\dot{\alpha}+\frac{d_1}{d_2}L\left(v\right)$, where $k_P=4.7\times10^5$ and $k_D=1.79\times10^5$ are proportional and derivative gains, respectively. Stability of the pre-stabilized system can be shown by using the following Lyapunov function:
\begin{align}
V(\alpha,\dot{\alpha},v)=\frac{1}{2}J\dot{\alpha}^2+\int_\alpha^v\Big(&d_1\left(L(\beta)-L(v)\right)\nonumber\\
&+d_2k_P\left(\beta-v\right)\Big)\cos(\beta)d\beta,
\end{align}
which can be \cite{nicotra2018explicit} lower- and upper-bounded as follows:
\begin{align}
\begin{bmatrix} \alpha-v\\\dot{\alpha}\end{bmatrix}^\top P_1\begin{bmatrix} \alpha-v\\\dot{\alpha}\end{bmatrix}\leq V(\cdot)\leq\begin{bmatrix} \alpha-v\\\dot{\alpha}\end{bmatrix}^\top P_2\begin{bmatrix} \alpha-v\\\dot{\alpha}\end{bmatrix},
\end{align}
where
\begin{align}
P_1=&\begin{bmatrix}
   \frac{1}{2}\left(d_1\frac{L(\alpha_s)-L(v)}{\alpha_s-v}+d_2k_P\right)\cos(\alpha_s) & 0\\ 
    0 & \frac{1}{2}J
\end{bmatrix},\\
P_2=&\begin{bmatrix}
   \frac{1}{2}d_2k_P+\frac{1}{2}d_1\nabla_v L(v) & 0\\ 
    0 & \frac{1}{2}J
\end{bmatrix}.
\end{align}

We assume that $\alpha(0)=\dot{\alpha}(0)=0$, and the desired angle of attack is $r=14$ [deg]. We update the applied reference every $dt=0.1$ seconds, and use the following parameters to implement the discrete-time ERG: $\eta_1=0.01$, $\eta_2=0.01$, $\xi=0.3$, and $\delta=0.1$.

Simulation results for three cases based on the choice of the parameter $\kappa$ are shown in Figure \ref{fig:aircraftResults}: i) $\kappa=10^{-3}$ which is suggested in \cite{nicotra2018explicit}; ii) dynamic $\kappa$ obtained as in \eqref{eq:ConditionKappa2} at every sampling instant; and iii) $\kappa=10^{-9}$ which is the value of $\kappa$ at the beginning of the process with the proposed discrete-time ERG.

As seen in Figure \ref{fig:aircraftResults}, $\kappa=10^{-3}$ which is suggested in \cite{nicotra2018explicit} leads to constraint violation. Implementing ERG with $\kappa=10^{-9}$ does not cause constraint violation; however, the obtained convergence performance is poor. The proposed discrete-time ERG starts with $\kappa=10^{-9}$ and increases the value of $\kappa$ whenever it is safe; as a result, the proposed discrete-time ERG not only guarantees constraint satisfaction at all times, but also yields a better convergence performance compared with cases with constant $\kappa$.

\begin{figure}[!t]
    \centering
    \includegraphics[width=8.5cm]{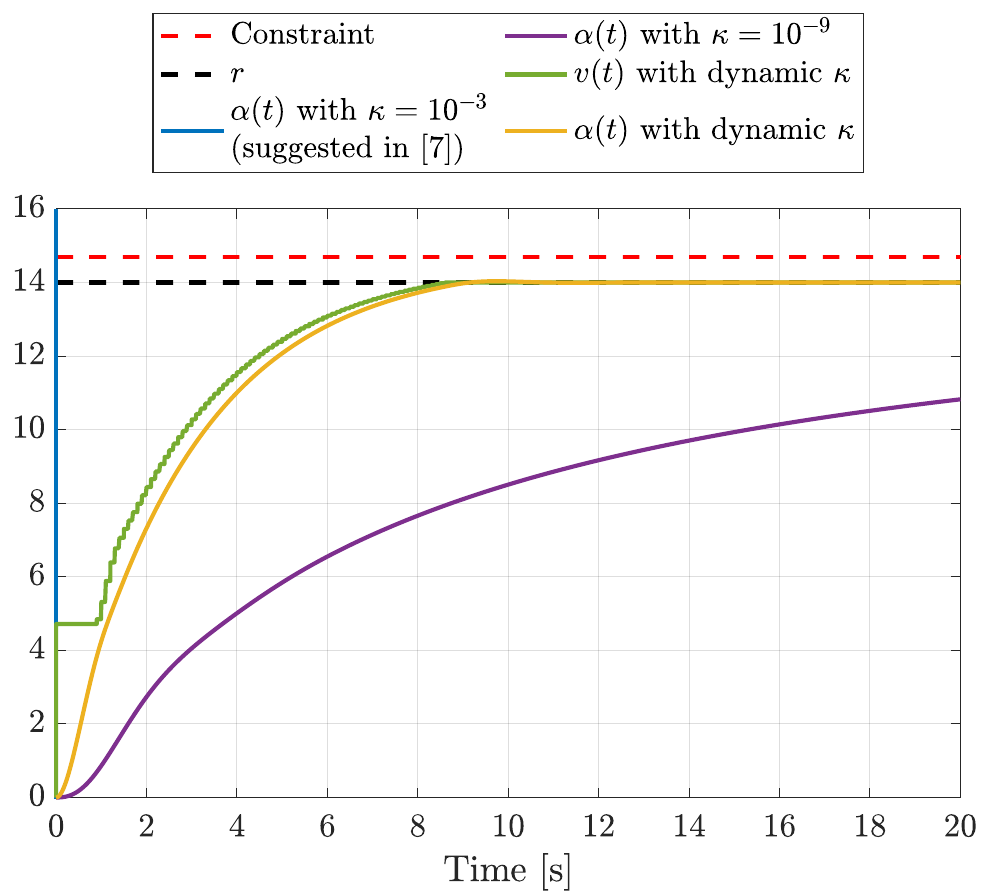}
    \includegraphics[width=8.5cm]{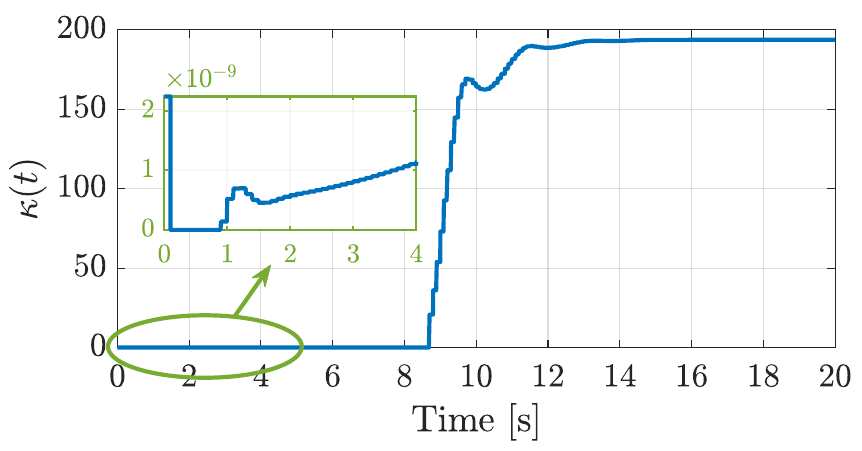}
    \caption{Top: simulation results with the proposed discrete-time ERG and with the ERG scheme with $\kappa=10^{-3}$ (suggested in \cite{nicotra2018explicit}), and $\kappa=10^{-9}$;  bottom: time profile of the determined $\kappa$ with the proposed discrete-time ERG.}
    \label{fig:aircraftResults}
\end{figure}

\section{Experimental Results}\label{sec:ExperimentalResults}
This section aims at experimentally validating the proposed discrete-time ERG by using it to navigate a Parrot Bebop 2 drone. Our experimental setup is shown in Figure \ref{fig:network}. We use \texttt{OptiTrack} system with ten \texttt{Prime$^\text{x}$ 13} cameras operating at a frequency of 120 Hz; these cameras provide a 3D accuracy of $\pm0.02$ millimeters which is acceptable for identification purposes. The computing unit is a $13^{\text{th}}$ Gen $\text{Intel}^{\text{\textregistered}}$ $\text{Core}^{\text{\texttrademark}}$ i9-13900K processor with 64GB RAM, on which the software \texttt{Motive} is installed to analyze and interpret the camera data. We use the ``Parrot Drone Support from MATLAB" package \citep{MATLAB}, and send the control commands to the Parrot Bebop 2 via WiFi and by using the command \texttt{move($\cdot$)}. It should be mentioned that the communication between \texttt{Motive} and MATLAB is established through User Datagram Protocol (UDP) communication using the \texttt{NatNet} service.

\begin{figure}[!t]
    \centering
    \includegraphics[width=8.5cm]{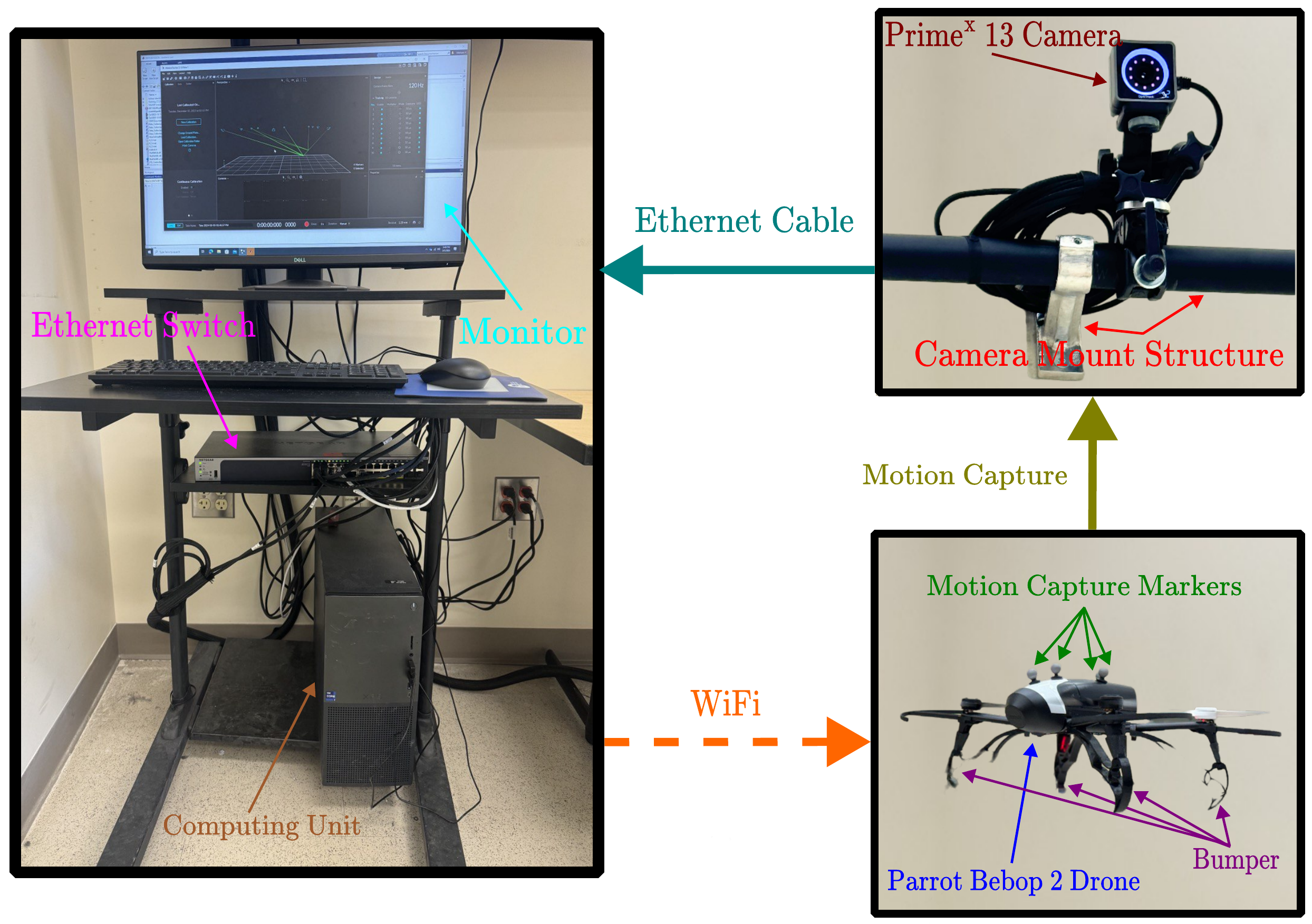}
    \caption{Overview of the experimental setup utilized to experimentally validate the proposed discrete-time ERG.}
    \label{fig:network}
\end{figure}

The dynamical model of the Parrot Bebop 2 drone can be expressed \cite{AmiriMECC} as $\dot{x}=Ax+Bu$, where $x=[p_x~\dot{p}_x~p_y~\dot{p}_y~p_z~\dot{p}_z]^\top$ with $p_x$, $p_y$, and $p_z$ being  X, Y, and Z positions of the drone in the global Cartesian coordinate system, $u=[u_x~u_y~u_z]^\top$ with $u_x$ being the pitch angle, $u_y$ being the roll angle, and $u_z$ being the vertical velocity, and 
\begin{align}
A=&\begin{bmatrix}
0 & 1 & 0 & 0 & 0 & 0 \\
0 & -0.05  & 0 & 0 & 0 & 0 \\
0 & 0 & 0 & 1 & 0 & 0 \\
0 & 0 & 0 &-0.02 & 0 & 0 \\
0 & 0 & 0 & 0 & 0 & 1 \\
0 & 0 & 0 & 0 & 0 & -1.79
\end{bmatrix},\\
B=&\begin{bmatrix}
0 & -5.48 & 0 & 0 & 0 &0 \\
0 & 0 & 0 & -7.06 & 0 & 0 \\
0 & 0 & 0& 0 & 0 & -1.74 
\end{bmatrix}^\top.
\end{align}

We use the following PD control laws to pre-stabilize the dynamics of the Parrot Bebop 2 drone\footnote{We use the PD control law $u_\psi=-\psi-0.8\dot{\psi}$ to regulate the yaw angle of the Parrot Bebop 2 drone to zero.}:
\begin{align}
u_x=&-0.08\left(p_x-v_x\right)-0.06\dot{p}_x,\\
u_y=&-0.08\left(p_y-v_y\right)-0.06\dot{p}_y,\\
u_z=&-1.7\left(p_z-v_z\right)-0.05\dot{p}_z,
\end{align}
where $v_x$, $v_y$, and $v_z$ are the applied references in X, Y, and Z directions, respectively. The stability of the pre-stabilized system can be shown by using the following Lyapunov function:
\begin{align}
V(x,v)=\begin{bmatrix} p_x-v_x\\ \dot{p}_x \\p_y-v_y \\\dot{p}_y \\p_z-v_z \\ \dot{p}_z
\end{bmatrix}^\top\begin{bmatrix}P_1 & \mathbf{0} & \mathbf{0}\\
\mathbf{0} & P_2 & \mathbf{0}\\
\mathbf{0} & \mathbf{0} & P_3
\end{bmatrix}\begin{bmatrix} p_x-v_x\\ \dot{p}_x \\p_y-v_y \\\dot{p}_y \\p_z-v_z \\ \dot{p}_z
\end{bmatrix},
\end{align}
where
\begin{align}
P_1=\begin{bmatrix}9.48 & -1 \\
-1 & 3.77 
\end{bmatrix},~P_2=\begin{bmatrix}7.05 & -1 \\
-1 & 3.54
\end{bmatrix},~P_3=\begin{bmatrix}1.35 & -1 \\
-1 & 2.11 
\end{bmatrix},
\end{align}
and thus, $m_1=0.6592$ and $m_2=9.6460$.

We assume that the position of the Parrot Bebop 2 drone along Y direction should satisfy the constraint $p_y(t)<1,~\forall t\geq0$. We assume that the desired positions along X, Y, and Z directions are 0, 1.2, and 1.5, respectively. To implement the proposed discrete-time ERG, we assume that $dt=0.083$ seconds, and we set $\eta_1=0.01$, $\eta_2=0.01$, $\xi=0.045$, and $\delta=0.04$.

We assume that the desired position along Y direction is $r=1.2$, which is not steady-state admissible. Figure \ref{fig:droneResult} shows the experimental results for three different scenarios based on the parameter $\kappa$: i) $\kappa=0.4$; ii) dynamic $\kappa$ obtained as in \eqref{eq:ConditionKappa2} at every sampling instant; and iii) $\kappa=0.08$ which is the value of $\kappa$ at the beginning of the process with the proposed discrete-time ERG. Note that due to space limitations, we only present time profile of the Parrot Bebop 2 drone's position along Y direction.

As seen in Figure \ref{fig:droneResult}, $\kappa=0.4$ leads to constraint violation. While $\kappa=0.08$ does not result in a constraint violation when using ERG, the convergence performance that is obtained is poor. In contrast to cases where $\kappa$ is constant, the suggested discrete-time ERG yields a better convergence performance and ensures that the constraint $p_y(t)<1$ is satisfied at all times. As seen in Figure \ref{fig:droneResult}-bottom, the proposed scheme begins with $\kappa=0.08$ and increases the value of $\kappa$ whenever it is safe to do so.

\section{Conclusion}\label{sec:Conclusion}
The ERG is a simple and systematic approach that provides constraint handling capability to pre-stabilized systems. The current literature on ERG provides theoretical analyses that are developed in continuous time. However, ERG must be implemented in discrete time to enforce constraints in real-world applications. This paper studied theoretical guarantees for the discrete-time implementation of ERG. It was analytically shown, when the updates of the applied reference are computed in discrete time, it is possible to maintain the feasibility and convergence properties of the ERG framework by appropriately adjusting the design parameters. The effectiveness of the proposed approach has been assessed via extensive simulation and experimental studies.

\begin{figure}[!t]
    \centering
    \includegraphics[width=8.5cm]{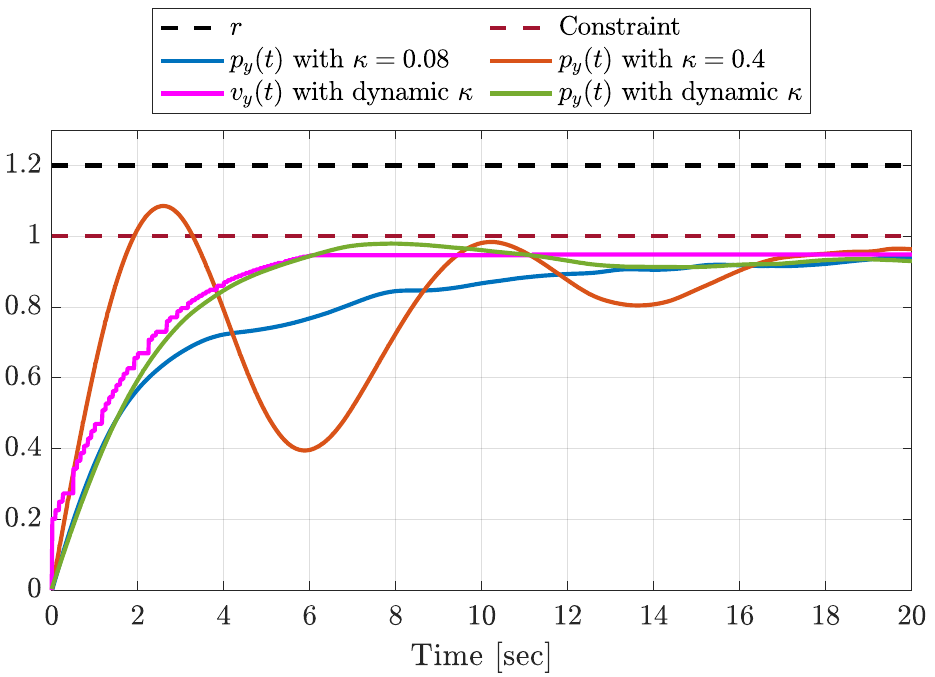}
    \includegraphics[width=8.5cm]{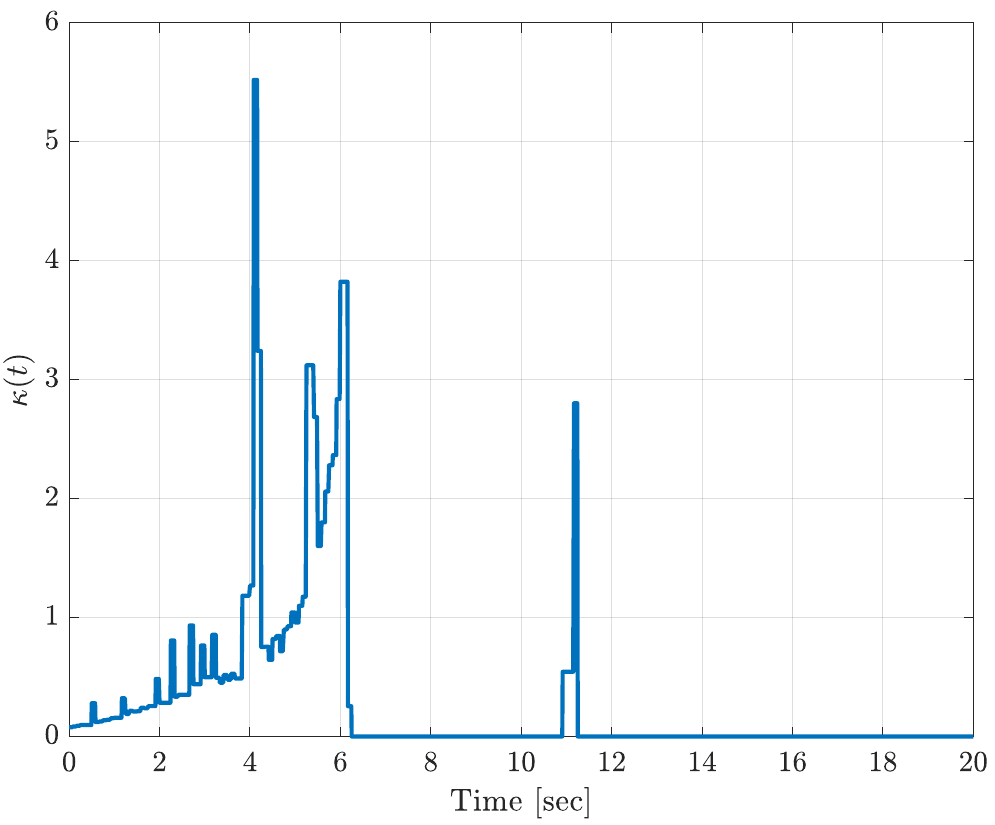}
    \caption{Top: experimental results with the proposed discrete-time ERG and with the ERG scheme with $\kappa=0.08$, and $\kappa=0.4$;  bottom: time profile of the determined $\kappa$ with the proposed discrete-time ERG.}
    \label{fig:droneResult}
\end{figure}

\bibliographystyle{elsarticle-num}
\bibliography{Reference}

\end{document}